\newtheorem{theorem}{Theorem}
\newtheorem{corollary}[theorem]{Corollary}
\newtheorem{lemma}[theorem]{Lemma}
\newtheorem{problem}[theorem]{Problem}
\newenvironment{proof}[1][Proof]{\textbf{#1.} }{\ \rule{0.5em}{0.5em}}
\begin{document}

\title{Data-Driven System Identification of Linear Quantum Systems Coupled to Time-Varying Coherent Inputs}

\author{Hendra I. Nurdin\thanks{H. I. Nurdin is with the School of Electrical Engineering and 
Telecommunications,  UNSW Australia,  Sydney NSW 2052, Australia (\texttt{email: h.nurdin@unsw.edu.au})} \and Nina H. Amini \thanks{N. H. Amini is  a Chargée de Recherche CNRS at Laboratoire des Signaux et Systèmes (L2S), CentraleSupélec, 91190 Gif-sur-Yvette, France (\texttt{nina.amini@l2s.centralesupelec.fr})} \and Jiayin Chen\thanks{J. Chen is with the School of Electrical Engineering and 
Telecommunications,  UNSW Australia,  Sydney NSW 2052, Australia (\texttt{email: jiayin.chen@student.unsw.edu.au} )}
}

\date{}

\maketitle

\begin{abstract}
In this paper, we develop a system identification algorithm to identify a model for unknown linear quantum systems driven by time-varying coherent states, based on empirical single-shot continuous homodyne measurement data of the system's output. The proposed algorithm identifies a model that satisfies the physical realizability conditions for linear quantum systems, challenging constraints not encountered in classical (non-quantum) linear system identification. Numerical examples on a multiple-input multiple-output optical cavity model are presented to illustrate an application of the identification algorithm. 
\end{abstract}

\section{Introduction}
Black-box modelling is a modelling paradigm based on learning about a system by observing its response to given inputs, without any prior knowledge of the system's internal structure. It is an important  paradigm in science and engineering, in particular in systems and control. For dynamical systems, black-box modelling is achieved through system identification and has a long rich history \cite{Ljung99}. In system identification, {\em single-shot (stochastic) measurement data} (i.e., a single stochastic observation record) collected from a system of interest is recorded against known inputs injected into it and  a mathematical model, chosen from a class of models with some unspecified parameters, is fitted based on the data. Stochasticity arises due to internal noise in the system as well as measurement noise.

In the quantum context, parameter estimation and versions of black-box modelling of dynamical quantum systems have been considered in various contexts; see, e.g., \cite{Mab96,BY12,ZS14,ZS15,SC17,WDZPY18} and the references therein. Parameter estimation for the class of quantum stochastic input-output models \cite{GZ04,GJ09,GJ08,CKS17}, ubiquitous in various physical platforms such as  quantum optics, quantum electrodynamical (QED) systems and superconducting circuits,  was initiated by Mabuchi \cite{Mab96}. However, the existing methods share one or more of the following features:  (i) they were developed for models other than quantum stochastic input-output models (e.g., closed systems with an unknown Hamiltonian) \cite{BY12,ZS14,ZS15,SC17,WDZPY18}, (ii) use repeated projective measurements and averaging rather than a single continuous measurement record \cite{BY12,ZS14,ZS15,SC17,WDZPY18} or (iii) assume everything is known about the system except for one or a number of unknown parameters \cite{Mab96,ZS15,SC17}. 

Recent works have investigated fundamental aspects of system identification for quantum input-output systems \cite{GY16,GK17,LG17,LGN18} but no empirical methods have yet been developed for system identification using  single-shot continuous measurement data.  Such methods are crucial for practical applications of system identification for quantum input-output systems. This paper will begin to close this gap by initiating the study of empirical system identification for the class of linear quantum systems \cite[\S 6.6]{WM10} \cite{NY17} based on single-shot continuous measurement data, in the spirit of the classical setting \cite{Ljung99}. The possibility  of using single-shot measurement data means that quantum input-output systems, such as linear quantum systems, could potentially be identified much more efficiently compared to other classes of quantum models in term of data collection. 

\noindent \textbf{Notation.} Throughout the paper, we will use the following notation. $X^{\top}$ denotes the transpose of a matrix $X$, $X^{\dag}$ denotes the adjoint of a Hilbert space operator $X$ and if $X=[X_{jk}]$ is a matrix of operators then $X^{\dag}$ is the conjugate transpose of $X$, $X^{\dag} =[X_{kj}^{\dag}]$. $I_n$ will denote an $n \times n$ identity matrix. 

\section{Linear quantum stochastic systems}
Linear quantum stochastic systems, or simply linear quantum systems, are the quantum analogue of linear stochastic systems and represent a collection of quantum harmonic oscillators coupled to one another through a quadratic Hamiltonian as well as being linearly coupled to external bosonic fields. They represent various quantum devices that have linear quantum stochastic evolution in the Heisenberg picture. This includes, for example, optical and superconducting cavities and parametric amplifiers, and gravitational wave interferometers \cite{GZ04,GJ09,CKS17}. They are of interest for linear quantum information processing with quantum Gaussian states and gravitational wave interferometry. 

\begin{figure}[!h]
\centering
\includegraphics[scale=0.43]{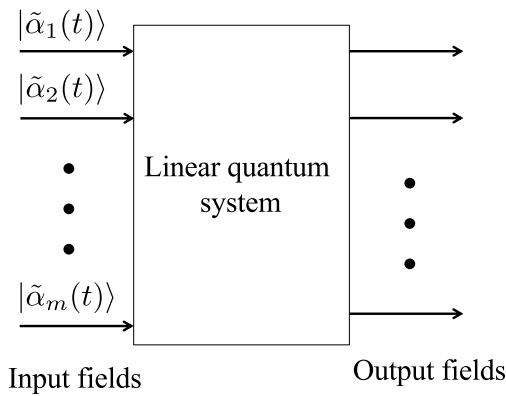}
\caption{ A linear quantum system driven by $m$ fields each in a coherent state}
\label{fig:lqs}
\end{figure}

Linear quantum systems are described by a vector $x =(q_1,p_1,q_2,p_2,\ldots,q_n,p_n)^{\top}$ where $q_j$ and $p_j$ are the position and momentum operators of oscillator $j$ and $n$ is the number of oscillators, a quadratic Hamiltonian $H = \frac{1}{2}x^{\top}  R x$, where $R = R^{\top} \in \mathbb{R}^{2n \times 2n}$, a linear coupling operator $L =Kx$ to $m$ external fields with $K \in \mathbb{C}^{m \times 2n}$, and a scattering matrix $S \in \mathbb{C}^{m \times m}$. When the system is driven by $m$  fields that are in a coherent state with amplitude vector $\tilde{\alpha}(t) = \tilde{\alpha}_R(t) + i \tilde{\alpha}_I (t) \in \mathbb{C}^{m}$, with $\tilde{\alpha}_R(t),\tilde{\alpha}_I(t) \in \mathbb{R}^m$ (see  Fig.~\ref{fig:lqs}), the  joint evolution of the system and field is given by a unitary propagator $U(t)$ solving the Hudson-Parthasarathy quantum stochastic differential equation (QSDE) \cite{KRP92}:
\begin{equation*}
\begin{split}
dU(t) =& \left(-i\left(H+\frac{1}{2}(L+\tilde{\alpha})^{\dag}(L+\tilde{\alpha})\right) dt \right. 
 \vphantom{\frac{1}{2}}  + d\mathcal{A}(t)^{\dag} (L +\tilde{\alpha}(t)) - (L+\tilde{\alpha}(t))^{\dag}d\mathcal{A}(t) 
\left. \vphantom{\frac{1}{2}} + {\rm Tr}((S-I)d\Lambda(t))\right)U(t), 
\end{split}
\end{equation*}
with initial condition $U(0)=I$. In the above QSDE, $\mathcal{A}(t)=[\begin{array}{cccc} \mathcal{A}_1(t) & \mathcal{A}_2(t) & \ldots & \mathcal{A}_m(t) \end{array}]^{\top}$ is the vector of annihilation operators for the $m$ field and $\Lambda(t) =[\Lambda_{jk}(t)]_{j,k=1,\ldots,m}$ (with $\Lambda_{jk}^{\dag} = \Lambda_{kj}$) satisfying the quantum It\={o} product rule:
\begin{align*}
d\mathcal{A}_j(t) d\mathcal{A}_k^{\dag}(t) &= \delta_{jk} dt,\; d\Lambda_{jk}(t)d\Lambda_{uv}(t) = \delta_{ku} d\Lambda_{jv}(t),\;\\
d\Lambda_{jk}(t)d\mathcal{A}^{\dag}_{l}(t) &= \delta_{kl} d\mathcal{A}^{\dag}_j(t), 
\end{align*}
with all other products between $d\mathcal{A}_j(t)$, $d\mathcal{A}^{\dag}_k(t)$ and $d\Lambda_{uv}(t)$ and their adjoints vanishing.

Let $\eta(t)=(\eta_1^q(t),\eta_1^p(t),\eta_2^q(t),\eta_2^p(t),\ldots,\eta_m^q(t),\eta_m^p(t))^{\top}$ with $\eta_j^q(t)=\mathcal{A}_j(t) + \mathcal{A}_j(t)^{\dag},\eta_j^p(t)=-i\mathcal{A}_j(t) + i\mathcal{A}_j(t)^{\dag}$ the amplitude and phase quadratures of the $j$-th field, respectively. The Heisenberg evolution $x(t) = U(t)^{\dag} x U(t)$ of the vector $x$ of position and momentum operators and the vector of output field $y(t)=U(t)^{\dag}\eta(t) U(t)$ are given by the linear QSDE (in the so-called quadrature form \cite[Chapter 2]{NY17}:
\begin{align}
\label{eq:lqs} \begin{split}
dx(t) &= A x(t) dt + B (\alpha(t) dt+ d\eta(t))\\
dy(t) &= Cx(t) dt + D(\alpha(t) dt + d\eta(t)). 
\end{split}
\end{align}
In the above, $A \in \mathbb{R}^{2n \times 2n}$, $B \in \mathbb{R}^{2n \times 2m}$, $C \in \mathbb{R}^{2m \times 2n}$, $D \in \mathbb{R}^{2m \times 2m}$ and $\alpha(t)$ is $2m \times 1$ vector of real functions representing the phase and amplitude quadratures of coherent amplitudes driving the system, $ \alpha(t) = (\tilde{\alpha}_{R,1}(t),\tilde{\alpha}_{I,1}(t),\ldots,\tilde{\alpha}_{R,m}(t),\tilde{\alpha}_{I,m}(t))^{\top}$, where $\tilde{\alpha}_{s,j}$ is the $j$-th component of $\tilde{\alpha}_{s}$, $s \in \{R,I\}$.  Similarly,  $y(t)=(y_1^q(t),y_1^p(t),y_2^q(t),y_2^p(t),\ldots,y_m^q(t),y_m^p(t))^{\top}$ is the output field vector containing the amplitude and phase quadratures of the output fields, where $y_j^q$ and $y_j^p$ denote the amplitude and phase quadratures of the $j$-th field, respectively. Due to quantum constraints, the matrices $A,B,C,D$ need to satisfy the physical realisability constraints \cite{JNP08,NY17}:
\begin{equation*}
 A \mathbb{J}_n +  \mathbb{J}_n  A^{\top} + B  \mathbb{J}_m B^{\top} =0,\; \mathbb{J}_n C^{\top} + B\mathbb{J}_m D^{\top} =0,
\end{equation*}
where $\mathbb{J}_n = I_n \otimes J$ and $J=\left[\begin{array}{cc} 0 & 1 \\ -1 & 0 \end{array}\right]$. If only steady-state measurement data are available, the parameters will only be identifiable up to a similarity transformation, $(A,B,C,D) \rightarrow (VAV^{-1},VB,CV^{-1},D)$ for some real invertible matrix $V$ \cite{GY16,LG17}. With this transformation,  $\mathbb{J}_n$ is replaced with $Z=V \mathbb{J}_n V^{\top}$ and the physical realizability constraints become:
\begin{equation}
 A Z +  Z  A^{\top} + B  \mathbb{J}_m B^{\top} =0\; \mathrm{(I)},\;  ZC^{\top}+ B\mathbb{J}_m D^{\top} =0 \; \mathrm{(II)}.  \label{eq:pr-similar}
\end{equation}
Note that in the above the matrix $Z$ is skew-symmetric, $Z=-Z^{\top}$ and is required to be invertible. 

Information about the system can be obtained by performing measurements on its output. For instance, two basic measurements are  $y_q(t) = (y_1^q(t),y_2^q(t),\ldots,y_m^q(t))^{\top}$ and $y_p(t) = (y_1^p(t),y_2^p(t), \ldots,y_m^p(t))^{\top}$. These measurements are known as {\em homodyne measurements} \cite[\S 4.4]{WM10}. The vector $y_q(t)$ is a homodyne measurement of the amplitude quadrature of the output, while $y_p(t)$ is a homodyne measurement of the phase quadratures. Note that quantum mechanics does not allow simultaneous measurements of $y_q(t)$ and $y_p(t)$ because the elements of these two vectors do not all commute with one another. Thus, it is only meaningful to measure one of these vectors at any time. It follows that,
\begin{eqnarray*}
dy_q(t) &= C_q x(t) dt + D_q (\alpha(t) dt + d\eta(t)),\\
dy_p(t) &= C_p x(t) dt + D_p (\alpha(t)dt + d\eta(t)),
\end{eqnarray*} 
with $C=[\begin{array}{cc} C_q^{\top} & C_p^{\top} \end{array}]^{\top}$ and $D=[\begin{array}{cc} D_q^{\top} & D_p^{\top}\end{array}]^{\top}$. It is possible to perform {\em heterodyne measurement} of $y_q$ and $y_p$ \cite[\S 4.5]{WM10} which would allow noisy simultaneous measurements of $y_p$ and $y_q$ (but they are not true simultaneous measurements of both quadratures).

When continuous measurement is performed on the quantum system, say by continuously measuring $y_q(t)$, the observed system undergoes a stochastic evolution according to the quantum Kalman filtering equation \cite[\S 4.2]{NY17}:
\begin{align*}
 \begin{split}
d\hat{x}^q(t) &= A \hat{x}^q(t) dt+ B \alpha(t) dt + L_q(t)d\nu_q(t) \\
dy_{qm}(t) &= C_q \hat{x}^q(t) dt + D_q \alpha(t) dt + D_qD_q^{\top}d\nu_q(t).
\end{split}
\end{align*}
Here $y_{qm}(t)$ is the measurement stochastic process (which can be mapped from the operator-valued quantum stochastic process $y_q(t)$ via the Spectral Theorem \cite[Theorem 3.3]{BvHJ07}), $\hat{x}^q$ is the conditional expectation of $x^q$ given the measurement $y_{qm}(t)$ \footnote{$\hat{x}^q$  is also the best mean square estimate of $x^q$ based on $y_{qm}(t)$ \cite{BvHJ07,NY17}} and 
\begin{align*}
\nu_q(t)= \left(D_q D_q^{\top}\right)^{-1}\left(y_{qm}(t) - \int_{0}^t (C_q\hat{x}^q(\tau)+D_q \alpha(\tau)) d\tau \right)
\end{align*}
is the so-called  {\em innovation process} of the quantum Kalman filter. Note that $\nu_q(t)$ is a classical standard Wiener process, $\mathbb{E}[\nu_q(t)\nu_q(t')^{\top}]=\min\{t,t'\}I_n$ that is independent of $\hat{x}^q(s)$ for all $0 \leq s \leq t$.  In the quantum Kalman filter, $L_q$ is the Kalman gain and is given by:
\begin{align*}
L_q (t)= Q_q(t) C_q^{\top} + B D_q^{\top},
\end{align*}
where $Q_q(t)=Q_q(t)^{\top} \geq 0$ satisfies the Riccati differential equation (RDE):
{\small
\begin{align*}
\dot{Q}_q(t) &=AQ_q(t)+ Q_q(t) A^{\top} + BB^{\top}-(Q_q(t)C_q^{\top} + B D_q^{\top}) (D_qD_q^{\top})^{-1}(Q_q(t)C_q^{\top} + BD_q^{\top})^{\top}.
\end{align*} 
}

If the system is asymptotically stable (i.e., the matrix $A$ is Hurwitz), the quantum Kalman filter converges to the steady-state quantum Kalman filter 
\begin{align}
\label{eq:innovation-form} \begin{split}
d\hat{x}^q(t) &= A \hat{x}^q(t) dt+ B \alpha(t) dt + L_qd\nu_q(t) \\
dy_{qm}(t) &= C_q \hat{x}^q(t) dt + D_q \alpha(t) dt + D_qD_q^{\top}d\nu_q(t).
\end{split}
\end{align}
where $L_q$ is the steady-state Kalman gain given by
\begin{align}
L_q &= Q_q C_q^{\top} + B D_q^{\top}, \label{eq:L-ss}
\end{align}
and $Q_q=Q_q^{\top} \geq 0$ satisfies the algebraic Riccati equation (ARE):
\begin{equation}\label{eq:ARE}
AQ_q + Q_qA^{\top} + BB^{\top} \notag  -(Q_qC_q^{\top} + B D_q^{\top}) (D_qD_q^{\top})^{-1}(Q_qC_q^{\top} + BD_q^{\top})^{\top} =0. 
\end{equation}
Although the equations above are given for measurement of $y_{qm}(t)$, analogous equations can be obtained when measurement of $y_{pm}(t)$ is made.

\section{Formulation and numerical solution of identification problem}

\subsection{Problem formulation}

In the system identification problem, we are interested in identifying a model of the form \eqref{eq:lqs} but with system matrices not necessarily of the same dimension, since the true dimensions are not known beforehand,  based on the measurement data $y_{qm}(t)$ or $y_{pm}(t)$. In this paper we do not consider heterodyne measurement of $y_q$ and $y_p$ but the approach can be adapted to that case. Throughout, we will consider the system identification problem under the following assumptions:

\noindent \textbf{Assumptions}
\begin{enumerate}
\item The matrix $A$ is Hurwitz.

\item  The data is collected after the system is at steady-state. 

\item The matrix $D$ is known. Hence $D_q$ and $D_p$ are known.
\end{enumerate}

An application of standard identification algorithms using knowledge of the single-shot continuous measurement record, say, $y_{qm}(t)$, would identify a model in the innovation form \eqref{eq:innovation-form}  with system matrices $(\hat{A},\hat{B},\hat{C}_q,\hat{L}_q)$. However, the identified system matrices from these algorithms will not necessarily satisfy the physical realizability constraints  \eqref{eq:pr-similar} as well as the constraints \eqref{eq:L-ss} and \eqref{eq:ARE}.

Suppose that we have identified system matrices $(\hat{A},\hat{B},\hat{C}_q,\hat{L}_q)$ through some classical identification procedure, such as ARMAX modelling or subspace identification \cite{Ljung99,OD96,Qin06}. The remaining problem is to identify system matrices $(\overline{A},\overline{B},\overline{C}_q,\overline{L}_q)$  that do satisfy all the constraints required of a linear quantum system. The following standard results will be useful in the ensuing discussion, we include the proofs here for the sake of completeness. 

\begin{lemma}
\label{lem:uniqueness-homogeneous} Let $\hat{A}$ be Hurwitz. Then the matrix equation $\hat{A} Z + Z\hat{A}^{\top} = 0$, with $Z$ the same dimension as $\hat{A}$, has the unique solution $Z=0$.
\end{lemma}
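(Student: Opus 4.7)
My plan is to reduce the homogeneous Lyapunov equation to a standard uniqueness statement using either a Kronecker-product / eigenvalue argument or the integral representation of Lyapunov solutions. Both are routine; I will sketch the cleaner eigenvalue route and mention the integral route as an alternative.

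First, I would vectorise the equation. Using the identity $\mathrm{vec}(XYW)=(W^\top\otimes X)\mathrm{vec}(Y)$, the matrix equation $\hat{A}Z+Z\hat{A}^\top=0$ becomes
\begin{equation*}
\bigl(I\otimes \hat{A} + \hat{A}\otimes I\bigr)\,\mathrm{vec}(Z) = 0.
\end{equation*}
Thus the claim reduces to showing that the Kronecker sum $I\otimes \hat{A} + \hat{A}\otimes I$ is invertible.

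Next, I would invoke the standard spectral fact that the eigenvalues of $I\otimes \hat{A} + \hat{A}\otimes I$ are exactly the pairwise sums $\lambda_i+\lambda_j$, where $\lambda_1,\ldots,\lambda_{2n}$ are the eigenvalues of $\hat{A}$ (counted with multiplicity). Since $\hat{A}$ is Hurwitz, every $\lambda_i$ satisfies $\mathrm{Re}(\lambda_i)<0$, so $\mathrm{Re}(\lambda_i+\lambda_j)<0$ for every $i,j$; in particular no eigenvalue of the Kronecker sum is zero. Hence the Kronecker sum is nonsingular, $\mathrm{vec}(Z)=0$, and therefore $Z=0$.

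As an alternative (and this is the route I might actually write up, since it is self-contained and will match the style of the paper), I could observe that $\frac{d}{dt}\bigl(e^{\hat{A}t} Z e^{\hat{A}^\top t}\bigr) = e^{\hat{A}t}(\hat{A}Z+Z\hat{A}^\top)e^{\hat{A}^\top t} = 0$, so $e^{\hat{A}t}Ze^{\hat{A}^\top t}$ is constant in $t$. Evaluating at $t=0$ gives the constant $Z$, while letting $t\to\infty$ uses the Hurwitz hypothesis to conclude $e^{\hat{A}t}\to 0$, hence the constant is $0$ and $Z=0$.

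There is no real obstacle here beyond citing the Hurwitz spectrum correctly; the only thing to be mindful of is to state the Kronecker-sum eigenvalue fact (or, in the alternative route, the decay $\|e^{\hat{A}t}\|\to 0$) cleanly, since that is the one non-trivial ingredient the proof rests upon.
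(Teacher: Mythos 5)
Your primary argument is exactly the paper's proof: vectorise to $(\hat{A}\otimes I + I\otimes \hat{A})\,\mathrm{vec}(Z)=0$ and use the fact that the Kronecker sum has eigenvalues $\lambda_i+\lambda_j$, all with negative real part when $\hat{A}$ is Hurwitz, so the only solution is $Z=0$. The alternative route via $e^{\hat{A}t}Ze^{\hat{A}^{\top}t}$ is also correct but is not needed; the proposal is sound as it stands.
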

\begin{proof}
Let $z_j$ denote the $j$-th column of $Z$ and let $\mathrm{vec}(Z)$ be the vectorization of $Z$ by stacking its columns one on top of the other starting with $z_1$ at the very top. The equation $\hat{A} Z + Z\hat{A}^{\top} = 0$ is equivalent to the equation $(\hat{A} \otimes I + I \otimes \hat{A}) \mathrm{vec}(Z)=0$. If $\lambda_1,\lambda_2,\ldots,\lambda_n$ are eigenvalues of $A$ (including their multiplicities), which all have negative real parts, then the eigenvalues of $ \hat{A} \otimes I + I \otimes \hat{A}$ are $\lambda_i+\lambda_j$ for $i,j=1,2,\ldots,n$. Therefore all eigenvalues of $ \hat{A} \otimes I + I \otimes \hat{A}$ also have negative real parts. It follows that the unique solution of $(\hat{A} \otimes I + I \otimes \hat{A}) \mathrm{vec}(Z)=0$ is $\mathrm{vec}(Z)=0$. Therefore, $Z=0$ is the unique solution of $\hat{A} Z + Z\hat{A}^{\top} = 0$.
\end{proof}

\begin{corollary}
\label{cor:uniqueness-pr-1} Let $\hat{A}$ be Hurwitz. Then the matrix equation  $\hat{A}Z + Z \hat{A}^{\top} + B \mathbb{J}_mB^{\top}=0$ has a unique solution $Z$ and this solution is skew-symmetric.
\end{corollary}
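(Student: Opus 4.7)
The plan is to deduce this directly from Lemma~\ref{lem:uniqueness-homogeneous} together with a symmetry/uniqueness argument. For existence and uniqueness I will invoke standard Lyapunov theory, and for skew-symmetry I will show that $-Z^{\top}$ satisfies the same equation as $Z$ and then apply uniqueness.

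First, for uniqueness I would suppose $Z_1$ and $Z_2$ are two solutions of $\hat{A}Z + Z\hat{A}^{\top} + B\mathbb{J}_m B^{\top} = 0$. Then $Z_1 - Z_2$ satisfies the homogeneous equation $\hat{A}(Z_1-Z_2) + (Z_1-Z_2)\hat{A}^{\top} = 0$, and Lemma~\ref{lem:uniqueness-homogeneous} immediately gives $Z_1 = Z_2$. For existence, since $\hat{A}$ is Hurwitz, the integral
\begin{equation*}
Z = \int_0^{\infty} e^{\hat{A}t}\, B\mathbb{J}_m B^{\top}\, e^{\hat{A}^{\top}t}\, dt
\end{equation*}
converges and a direct differentiation under the integral (or the usual Lyapunov argument) verifies that it solves the equation.

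For skew-symmetry, I take the transpose of the governing equation. Since $\mathbb{J}_m$ is skew-symmetric, $(B\mathbb{J}_m B^{\top})^{\top} = -B\mathbb{J}_m B^{\top}$, so transposing yields
\begin{equation*}
\hat{A} Z^{\top} + Z^{\top} \hat{A}^{\top} - B\mathbb{J}_m B^{\top} = 0,
\end{equation*}
i.e.\ $-Z^{\top}$ satisfies exactly the same equation $\hat{A}(-Z^{\top}) + (-Z^{\top})\hat{A}^{\top} + B\mathbb{J}_m B^{\top} = 0$ as $Z$ does. By the uniqueness just established, $Z = -Z^{\top}$, so $Z$ is skew-symmetric.

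I do not anticipate any real obstacle here: each step is a short textbook manipulation once Lemma~\ref{lem:uniqueness-homogeneous} is in hand. The only point worth being careful about is making explicit that skew-symmetry of $\mathbb{J}_m = I_m \otimes J$ (which follows from $J^{\top} = -J$) is what allows the sign flip on the forcing term under transposition; that is the crucial structural input that distinguishes this from a generic Lyapunov equation and produces a skew-symmetric rather than symmetric solution.
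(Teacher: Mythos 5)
Your proof is correct and follows essentially the same route as the paper: uniqueness comes from Lemma~\ref{lem:uniqueness-homogeneous} (the difference of two solutions solves the homogeneous equation), and skew-symmetry comes from observing that $-Z^{\top}$ solves the same equation because $B\mathbb{J}_m B^{\top}$ is skew-symmetric, then invoking uniqueness --- exactly the paper's argument. The only cosmetic difference is that you establish existence via the explicit Lyapunov integral, whereas the paper inverts the Kronecker sum $\hat{A}\otimes I + I\otimes \hat{A}$; both are standard and interchangeable here.
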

\begin{proof}
Following the proof of Lemma \ref{lem:uniqueness-homogeneous}, $\hat{A}Z + Z \hat{A}^{\top} + B \mathbb{J}_mB^{\top}=0$  is equivalent to the equation $(\hat{A} \otimes I + I \otimes \hat{A}) \mathrm{vec}(Z) = -\mathrm{vec}(B \mathbb{J}_m B^{\top})$. By the same argument as in that proof, when $\hat{A}$ is Hurwitz the equation has a unique solution $Z$, corresponding to $\mathrm{vec}(Z)=-(\hat{A} \otimes I + I \otimes \hat{A})^{-1} \mathrm{vec}(B \mathbb{J}_mB^{\top})$. Furthermore,  we can also inspect that  if $Z$ is a solution then so is  $-Z^{\top}$. Therefore, $Z=-Z^{\top}$ and the unique solution must be skew-symmetric.
\end{proof}

%Since we assume to know $D_q$, the second physical realizability constraint means that if we fix $\overline{A} = \hat{A}$ and $\overline{B} = \hat{B}$  then, by \eqref{cor:uniqueness-pr-1}, a solution $Z$ of condition (I) of \eqref{eq:pr-similar} is unique  and uniquely determined by $\hat{A}$ and $\hat{B}$.  If $Z$ is invertible (which we want to have), by condition (II) of \eqref{eq:pr-similar}  we would have that $\overline{C}$ is also uniquely determined by $\hat{A}$ and $\hat{B}$. This implies that we have no freedom to choose $\overline{C}$ so that it is close to the identified estimate $\hat{C}$. So, we do not want to fix both $\overline{A}$ and $\overline{B}$. However, to keep things simple in the following, we will fix one of them.

In the approach that will be developed below, we first determine $(\overline{A},\overline{B},\overline{C}_q)$ (with a Hurwitz $\overline{A}$) and then solve for the Kalman gain $\overline{L}_q$. Given estimates $(\hat{A},\hat{B},\hat{C}_q)$, we introduce a loss function $\mathcal{L}$ that is nonnegative function of $\Delta A=\overline{A}-\hat{A}$, $\Delta B=\overline{B}-\hat{B}$ and $\Delta C_q = \overline{C}_q-\hat{C}_q$ with the property that $\mathcal{L}(\Delta A, \Delta B, \Delta C_q)=0 \Rightarrow \Delta A=0, \Delta B=0$ and $\Delta C_q=0$. 
 
We formulate a linear quantum system identification problem as follows.

\begin{problem}%[A linear quantum system identification problem]
\label{sysid_prob}
$$\mathop{\mathrm{minimize}}_{\overline{A}, \overline{B},\overline{C}_q, Z, P} \mathcal{L}(\Delta A, \Delta B, \Delta C_q)$$
subject to
\begin{align}
\label{eq:opt-cons} \begin{split}
& P > 0, \\
& P-P^{\top}=0,\\
& \overline{A}^\top P + P \overline{A} <0,  \\
& \overline{A} Z +  Z  \overline{A}^\top + \overline{B}  \mathbb{J}_m \overline{B}^{\top} =0,  \\ 
& Z\overline{C}_q^{\top}+ \overline{B}\mathbb{J}_m D_q^{\top} =0,  \\
& Z+Z^{\top}=0, \\
& \det(Z)^2 > 0.
\end{split} 
\end{align}
\end{problem}

For the loss function $\mathcal{L}$, we choose a simple quadratic  function,
\begin{align*}
\lefteqn{\mathcal{L}(\Delta A,\Delta B, \Delta C_q)} \\
&= \frac{1}{2}\left( \|\overline{A} - \hat{A}\|_2^2 + \| \overline{B} - \hat{B}\|_2^2 + \| \overline{C}_q - \hat{C}_q\|_2^2\right),
\end{align*}
where $\|X\|_2 = \sqrt{{\rm tr}(X^{\top}X)}$.

After obtaining a solution $(\overline{A}, \overline{B}, \overline{C}_q)$ to the optimization problem \ref{sysid_prob}, we solve for the corresponding Kalman gain $\overline{L}_q$ for the linear quantum system according to \eqref{eq:L-ss} and \eqref{eq:ARE}.

\subsection{Numerical solution}
\label{sec:numerical}
The system identification problem, Problem  \ref{sysid_prob},  formulated in the previous section can be viewed as a matrix polynomial programming problem. The objective function is a quadratic function of matrix variables and all the variables are matrix-valued. This is a formidable non-convex optimization problem for which there is  no known general solution. Here we borrow  a technique proposed in \cite{NJP09} to introduce matrix lifting variables to transform  the original matrix polynomial  programming problem to a rank constrained LMI problem.  The latter problem can be numerically solved with the LMIRank algorithm \cite{LMIRank,OHM06} (run on the Yalmip toolbox for Matlab \cite{YALMIP}) as originally proposed in \cite{NJP09} (see also \cite[\S 5.2.1]{NY17}).  

In the transformation below we will drop the constraint ${\rm det}(Z)^2>0$ as generically this constraint is expected to be satisfied in the sense that the set where $\det(Z)=0$ forms a ``thin set" in the set of  all skew-symmetric matrices in $\mathbb{R}^{2n \times 2n}$; for a discussion of the notion thinness, see, e.g., \cite{NGP16}. To transform the problem we introduce two positive semidefinite symmetric matrix lifting variables $\mathbf{G}_1 \in \mathbb{R}^{(10n+3m) \times (10n+3m)}$ and $\mathbf{G}_2 \in \mathbb{R}^{(4n+2m) \times (4n+2m)}$. We will require these two matrices to satisfy the rank constraints ${\rm rank}(\mathbf{G}_1) \leq 2n$ and ${\rm rank}(\mathbf{G}_2) \leq 2m$. If these matrices do indeed satisfy the rank constraints then we can factorize them as $\mathbf{G}_j = G_j G_j^{\top}$  and identify the block elements of $G_j$ as follows:\\
{\small
\begin{equation}
\label{eq:lmi_var}
\begin{split}
G_1^{\top} & = \begin{bmatrix} I_{2n} & \overline{A}^\top & \overline{A} & \overline{B} & \overline{C}_q^{\top} & Z^{\top} & P^\top \end{bmatrix}, \\
G_2^{\top} & = \begin{bmatrix} I_{2m} & \overline{B}^{\top} & \mathbb{J}_m^{\top} \overline{B}^{\top} \end{bmatrix}.
\end{split}
\end{equation}
}
Now, let ${\bf G}_j(k, l)$ denote the $(k, l)$-th block matrix in ${\bf G}_j$.  If the $\mathbf{G}_j$ matrices satisfy the specified rank constraints then we have the identification ${\bf G}_j(k, l)=G_j(k) G_j(l)^{\top}$, where $G_j(k)$ denotes the $k$-th block element of $G_j$ according to the block partitioning in \eqref{eq:lmi_var}.  In terms of these block matrices the cost function $L$ can be written as
\begin{equation*}
\begin{split}
 \mathcal{L}(\Delta A, \Delta B, \Delta C_q) & = \frac{1}{2} \left( {\rm Tr}[\mathbf{G}_1(2, 2)+ \hat{A} \hat{A}^\top]  + {\rm Tr}[\mathbf{G}_1(4, 4) + \hat{B} \hat{B}^\top] \right. \\
& \left. \hspace*{3em} + {\rm Tr}[\mathbf{G}_1(5, 5) + \hat{C}_q \hat{C}_q^\top]  - 2 {\rm Tr}[\hat{A}^\top \mathbf{G}_1(2, 1)] \right. \\
& \left. \hspace*{3em} - 2{\rm Tr}[\hat{B}^\top \mathbf{G}_1(1, 4)] - 2{\rm Tr}[\hat{C}_q^\top \mathbf{G}_1(5, 1)] \right)
% \lefteqn{L(\Delta B,\Delta C_q,\Delta L_q)}\\
%  & = \frac{1}{2} \left( {\rm Tr}[{\bf G}_1(2, 2) + \hat{B} \hat{B}^{\top}] + {\rm Tr}[{\bf G}_1(3, 3) + \hat{C}_q \hat{C}_q^{\top}] \right.\\
% &\qquad + {\rm Tr}[{\bf G}_1(4, 4) + \hat{L}_q^{\top} \hat{L}_q] - 2 {\rm Tr}[{\bf G}_{1}(2, 1)\hat{B}]  \\
% & \hspace{1cm} \left. - 2{\rm Tr}[{\bf G}_1(3, 1) \hat{C}^{\top}_q] - 2 {\rm Tr}[{\bf G}_1(4, 1) \hat{L}_q] \right).
\end{split}
\end{equation*}
and the constraints \eqref{eq:opt-cons} can be written as
\begin{equation*}
\begin{split}
&\mathbf{G}_1(1, 7) - \mathbf{G}_1(7, 1) =0,\\ 
&\mathbf{G}_1(1, 7) \geq \epsilon I_{2n}, \\
&\mathbf{G}_1(3, 7) + \mathbf{G}_1(7, 3) \leq -\epsilon \mathbf{G}_1(1,7),  \\ 
&-\mathbf{G}_1(2, 6) + \mathbf{G}_1(6, 2) + \mathbf{G}_2(3, 2) = 0, \\
&\mathbf{G}_1(6, 5) + \mathbf{G}_1(1, 4) \mathbb{J}_m D_q^\top = 0, \\
&\mathbf{G}_1(1, 6) + \mathbf{G}_1(6, 1) = 0. 
% &\hat{A} {\bf G}_1(5, 1) + {\bf G}_1(5, 1) \hat{A}^{\top} + {\bf G}_2(3, 2) = 0, \\
% &{\bf G}_1(5, 3) + {\bf G}_1(1, 2) \mathbb{J}_m D_q = 0, \\
% &{\bf G}_1(1, 4) - {\bf G}_1(6, 3) - {\bf G}_1(1, 2) D_q^{\top}= 0, \\
% &\hat{A} {\bf G}_1(6, 1) + {\bf G}_{1}(6, 1) \hat{A}^{\top} + {\bf G}_2(2, 2) \\
% &\quad - \left( {\bf G}_3(2, 3) + {\bf G}_3(2, 5) + {\bf G}_3(3, 3) + {\bf G}_3(3, 5) \right)  = 0, \\
% &{\bf G}_1(6, 1)  \geq 0, \\
% & {\bf G}_{1}(5, 1) + {\bf G}_1(1, 5)  = 0,
\end{split}
\end{equation*}
where $\epsilon>0$ (we set $\epsilon=10^{-3}$ throughout) and the last constraint ensures the solution for $Z$ returned by the algorithm is skew-symmetric. The constant $\epsilon$ has been introduced to replace strict inequality constraints with non-strict ones, as required for the numerical software packages that will be used. 
From \eqref{eq:lmi_var}, we obtain the following auxiliary constraints on the block elements of $\mathbf{G}_j(k,l)$: \begin{equation*}
\begin{split}
{\bf G}_1(1, 1) - I_{2n} & = 0, \\
{\bf G}_2(1, 1) - I_{2m} & = 0, \\
{\bf G}_1(1, 3) - {\bf G}_1(2, 1) & = 0, \\
{\bf G}_1(1, 4) - {\bf G}_2(2, 1) & = 0, \\
{\bf G}_2(3, 1) - {\bf G}_2(2, 1) \mathbb{J}_m & = 0, \\
{\bf G}_i & \geq 0, \quad i=1,2
\end{split}
\end{equation*}
and the original rank constraints
\begin{equation*}
{\rm rank}({\bf G}_1) \leq 2n, \quad {\rm rank}({\bf G}_2) \leq 2m.
\end{equation*}
We remark that if the above constraints are satisfied the original variables  of the problem  can be recovered from the corresponding block elements of $\mathbf{G}_j$, according to \eqref{eq:lmi_var}. We then solve for the corresponding Kalman gain $\overline{L}_q$ for the identified linear quantum system according to \eqref{eq:L-ss} and \eqref{eq:ARE}.

To solve this rank-constrained LMI problem, we employ the LMIRank algorithm in \cite{OHM06}. The initial guess for the algorithm is chosen to be $\hat{{\bf G}}_j = \hat{G}_j \hat{G}^{\top}_j$, where $\hat{G}_j^{\top}$ is obtained from $G_j^{\top}$ by replacing the variables $(\overline{A},\overline{B}, \overline{C}_q)$ with $(\hat{A}, \hat{B}, \hat{C}_q)$. We set the initial guess for $P$ as a solution to the LMI $\hat{A}^\top P + P \hat{A} < 0, P > 0$ and the initial guess for $Z$ to be $\mathbb{J}_n$.

The LMIRank algorithm only solves a feasibility problem. To minimize the cost function, we employ a standard bisection strategy by including $\mathcal{L}(\Delta A,\Delta B,\Delta C_q) \leq \gamma$ as an additional constraint in the feasibility problem. Starting with an initial guess, we half $\gamma$ each time the LMIRank algorithm returns a feasible solution. Otherwise, we set $\gamma = 1.2 \gamma$.

\section{Numerical examples}
To test the proposed identification method, we will use simulated data of quadrature measurements at the output of a linear quantum system. This can be done in a standard way by generating a sample of a band-limited approximation of the standard white noise vector $\dot{\nu}_q(t)$ (or $\dot{\nu}_p(t)$ depending on the measurement being considered) satisfying $\mathbb{E}[\dot{\nu}_q(t)\dot{\nu}_q(t')^{\top}]=\delta(t-t') I$, and numerically integrating the SDE for the quantum Kalman filter \eqref{eq:innovation-form} with a small sampling time of $T_s$ to generate $\dot{y}_{qm}$ ($y_{qm}$ is just the integral of $\dot{y}_{qm}$). We use time derivatives because classical linear system identification algorithms implemented in Matlab use the derivative  $\dot{y}_{qm}$ as the input data.

As a numerical example, we consider identifying a multiple-input multiple-output optical cavity with position and momentum operators $q$ and $p$. Here $H=\Delta (q^2+p^2)/2$ and $$L = [\begin{array}{ccc} \sqrt{\kappa_1}(q+ i p)/2 & \sqrt{\kappa_2}(q+ i p)/2 & \sqrt{\kappa_3}(q+ i p)/2 \end{array}]^{\top},$$ with $\Delta=10$, $\kappa_1=5$, $\kappa_2=3$, and $\kappa_3=2$, and $S=I_3$, corresponding to the system matrices,  
\begin{equation*}
\begin{split}
A & = \begin{bmatrix} -5 & 20 \\ -20 & -5 \end{bmatrix}, \; C = \begin{bmatrix} 2.2361 & 0 \\ 0 & 2.2361 \\ 1.7321 & 0 \\ 0 & 1.7321 \\ 1.4142 & 0 \\ 0 & 1.4142 \end{bmatrix},\; D  = I_{6}.
\\
B & = \left[\begin{array}{cccccc} -2.2361 & 0 & -1.7321 & 0 & -1.4142 & 0 \\  0 & -2.2361 & 0 & -1.7321 & 0 & -1.4142 \end{array} \right], 
\end{split} 
\end{equation*}

Using a sampling time of $T_s = 10^{-2}$ s, we generate the measurement data from the system ($\dot{y}_{pm}$ and $\dot{y}_{qm}$) for a total time duration of $80$ s, with initial state $\hat{x}^j(0) = 0$, where $j \in \{q,p\}$. The first $20$ seconds of the data is for driving the system to its steady state and is not used for identification. The next $30$ seconds of the data is used for model estimation and the last $30$ seconds is for model validation. The system is excited by a pseudo-random binary sequence (PRBS) generated using the ``idinput'' Matlab command, a persistently exciting input signal \cite[Chapter 13]{Ljung99}. The amplitudes of the PRBS are set to be $\Omega = \{10/\sqrt{T_s}, 50/\sqrt{T_s}, 100/\sqrt{T_s}\}$ to investigate the effect of different signal-to-noise ratio (SNR) in the presence of white noise on the estimated models. We employ subspace identification \cite{OD96,Qin06} through the ``n4sid'' Matlab command to estimate the system matrices. As the order of estimated models is unknown a priori, classical (non-physically realizable) models of state-space dimension $2n \in \{2, 4, 6\}$ are identified and compared using their ``relative energy" contributions, as computed and plotted by the n4sid command. States with small relative energies contribute little to the model accuracy and can be discarded with little impact. Table~\ref{table:q} and Table~\ref{table:p} show the relative energy contributions of estimated classical models using measurement data $\dot{y}_{qm}$ and $\dot{y}_{pm}$, respectively. For all values of $\Omega$, relative energy suggests that the simplest model with $n=1$ is sufficient. As $\Omega$ increases, relative energy for $n=1$ further increases.

From the classical models produced by the subspace identification, we then identify system matrices $\left(\overline{A}_j, \overline{B}_j, \overline{C}_j \right)$ that satisfy all constraints \eqref{eq:opt-cons} of a linear quantum system using the LMIRank algorithm. We observe that the magnitudes of $\hat{B}_j$ estimated by subspace identification are small while the magnitudes of $\hat{C}_j$ are large. To avoid poor numerical conditioning for the LMIRank algorithm, we perform a similarity transformation with $T = 6 \Omega I_{2n}$. This transformation leaves $\hat{A}_j$ unchanged but scales $\hat{B}_{j}$ by $6 \Omega$ and $\hat{C}_{j}$ by $\frac{1}{6 \Omega}$. Using the bisection strategy, LMIRank returns the cost function values $\gamma_j$ tabulated in Table~\ref{table:q} and Table~\ref{table:p}. We compute the Akaike final prediction-error (FPE) as in \cite[Chapter~16]{Ljung99} for the estimated (physically realizable) quantum models obtained after applying the optimization algorithm in Section~\ref{sec:numerical}. The FPE is defined by
\begin{equation*}
\begin{split}
{\rm FPE}_j = \det \left( \frac{1}{N} \sum_{k} e_j(kT_s) e_j(kT_s)^{\top} \right) \frac{1 + d/N}{1 - d/N},  
\end{split}
\end{equation*}
where the summation is over $e_j(k T_s)$ for the validation data (the last 30 s), $N$ is the number of validation data and $d$ is the number of estimated parameters. The prediction error $e_j(kT_s)$ is obtained using the ``resid'' Matlab command. We also report the percentage fit for each output, defined by 
$${\rm Fit}_{j, l} = \left(1- \frac{\sqrt{\sum_{k} e^2_{j, l}(kT_s)}}{\sqrt{\sum_{k} (\dot{y}_{jm,l}(kT_s) - \mu_{jm,l})^2}} \right) \times 100\%,$$
where $l=1,\ldots, m$, $e_{j, l}(kT_s)$ and $\dot{y}_{jm, l}(kT_s)$ are the $l$-th component of $e_j(kT_s)$ and $\dot{y}_{jm}(kT_s)$, and $\mu_{jm, l} = \frac{1}{N} \sum_k \dot{y}_{jm,l}(kT_s)$. The percentage fits are computed using the ``compare'' Matlab command. 

% The prediction error is $e_j(kT_s) = \dot{y}_j(k T_s) -  \dot{\overline{y}}_j(k T_s)$ and $\dot{\overline{y}}_j(k T_s) = \overline{C}_j \overline{x}^j(k T_s)$. The estimated state $\overline{x}^j(t)$ evolves according to
% \begin{equation*}
% \begin{split}
% \dot{\overline{x}}^j(t) = & \hat{A} \overline{x}^j(t) + \overline{B}_j \alpha(t) \\
% & + \overline{L}_j \left(D_q D_q^{\top}\right)^{-1} \left(\dot{y}_j(t) - \overline{C}_j \overline{x}^j(t) - D_j \alpha(t) \right),
% \end{split}
% \end{equation*}
% Simulation of $\overline{x}^j(t)$ is performed using the ``lsim'' Matlab command, starting from time $t=0$ with the estimated initial condition obtained via subspace identification. 
See Table~\ref{table:q} and Table~\ref{table:p} for ${\rm FPE}_j$ and ${\rm Fit}_{j,l}$ for $j\in\{p, q\}$, respectively.

\begin{table}[ht!]
\centering
\caption{Relative energy contributions of estimated classical models, $\gamma_q$, ${\rm FPE}_q$ and ${\rm Fit}_{q, l}$ for estimated quantum models according to measurement data $\dot{y}_{qm}$.}
\label{table:q}
\begin{tabular}{ c | c | c | c | c | c | c | c}
$\Omega$                             & $n$ & Relative  & $\gamma_q$ & ${\rm FPE}_q$  & ${\rm Fit}_{q, 1}$ & ${\rm Fit}_{q, 2}$ & ${\rm Fit}_{q, 3}$ \\
& &  energy &  & $(\times 10^{6})$ & (\%) & (\%) & (\%) \\
\hline\hline
\multirow{3}{1.5em}{$\frac{10}{\sqrt{T_s}}$}   & 1   & $7.61$  & $0.0094$   & $1.11$   & $59.8 $ & $50.1 $ & $42.2 $ \\
                                               & 2   & $5.08$  & $0.65$     & $1.28$   & $59.2 $ & $47.8 $ & $41.6 $ \\
                                               & 3   & $5.00$   & $1.56$     & $1.16$   & $59.8 $ & $50.0 $ & $42.2 $ \\
\hline
\multirow{3}{1.5em}{$\frac{50}{\sqrt{T_s}}$}   & 1   & $9.21$   & $0.004$    & $1.15$   & $91.0 $ & $88.4 $ & $86.3 $ \\
                                               & 2   & $5.08$   & $0.65$     & $6.24$   & $87.5 $ & $86.1 $ & $80.3$ \\
                                               & 3   & $5.00$   & $1.56$     & $1.43$   & $90.6$ & $88.4$ & $85.7$ \\
\hline
\multirow{3}{1.5em}{$\frac{100}{\sqrt{T_s}}$}   & 1   & $9.91$   & $0.001$   & $1.14$   & $95.5$ & $94.2$ & $93.1$ \\
                                                & 2   & $5.08$   & $0.54$    & $1.17$   & $95.5$ & $94.2$ & $93.1$ \\
                                                & 3   & $5.01$   & $1.56$    & $2.46$   & $94.6$ & $93.9$ & $92.0$ \\
\end{tabular}
\end{table}

\begin{table}[ht!]
\centering
\caption{Relative energy contributions of estimated classical models, $\gamma_p$, ${\rm FPE}_p$ and ${\rm Fit}_{p, l}$ for estimated quantum models according to measurement data $\dot{y}_{pm}$.}
\label{table:p}
\begin{tabular}{ c | c | c | c | c | c | c | c}
$\Omega$                             & $n$ & Relative   & $\gamma_p$ & ${\rm FPE}_p$  & ${\rm Fit}_{p, 1}$ & ${\rm Fit}_{p, 2}$ & ${\rm Fit}_{p, 3}$ \\
 & & energy & & $(\times 10^{6})$  & (\%) & (\%) & (\%) \\
\hline\hline
\multirow{3}{1.5em}{$\frac{10}{\sqrt{T_s}}$} & 1   & $7.59$   & $0.015$    & $1.11$   & $58.8 $ & $49.3 $ & $42.9 $ \\
                                             & 2   & $5.12$    & $0.65$     & $1.12$   & $58.0 $ & $48.6 $ & $42.7 $ \\
                                             & 3   & $5.05$    & $3.73$     & $1.13$   & $58.1 $ & $48.8 $ & $41.7 $ \\
\hline
\multirow{3}{1.5em}{$\frac{50}{\sqrt{T_s}}$} & 1   & $9.20$    & $0.004$    & $1.11$  & $91.0 $ & $88.4 $  & $86.1 $ \\
                                             & 2   & $5.12$    & $0.65$     & $3.0$   & $89.2 $ & $86.0 $  & $83.8 $ \\
                                             & 3   & $5.05$    & $3.73$     & $9.5$   & $87.5 $ & $82.5 $  & $80.3 $ \\
\hline
\multirow{3}{1.5em}{$\frac{100}{\sqrt{T_s}}$} & 1   & $9.90$   & $0.001$   & $1.11$   & $95.4 $ & $94.2 $ & $93.0 $ \\
                                              & 2   & $5.12$   & $0.54$    & $1.46$   & $95.3 $ & $94.0 $ & $92.7 $ \\
                                              & 3   & $5.05$   & $3.73$    & $22.4$   & $91.2 $ & $90.8 $ & $88.1$ \\
\end{tabular}
\end{table}% Based on measurement data $\dot{y}_{qm}$, for all values of $\Omega$ and $n$, the AICs are similar in value for the estimated classical models. Estimated classical models based on measurement data $\dot{y}_{pm}$ achieve smallest AICs when $n=1$. These observations suggest that the simplest model with $n=1$ is sufficient. 
For all values of $\Omega$, estimated physically realizable quantum models with $n=1$ achieve the smallest $\gamma_j$ and ${\rm FPE}_j$, as well as the best percentage fits. As the signal amplitude increases, for $n=1$, $\gamma_j$ decreases from around $0.01$ to $0.001$ and the percentage fits increase from around 50\% to over 90\%. 
% For $\Omega =\{5 \times 10^{5}, 5 \times 10^{6} \}$, the AICs are similar in value for estimated classical models of different orders, suggesting the simplest model with $n=1$ is sufficient based on the measurement data. For these values of $\Omega$, order $n=2$ estimated physically realizable quantum models achieve similar cost function values $\gamma_j$ as the order $n=1$ estimated quantum models, but smaller ${\rm FPE}_j$. As the signal amplitude increases to $\Omega = 5 \times 10^{7}$, the AIC values for order $n=3$ are relatively larger than those  for the lower orders. The order $n=1$ estimated physically realizable models achieve significantly smaller cost function values $\gamma_j$ and ${\rm FPE}_{j}$. 
In fact, when $\Omega=100/\sqrt{T_s}$, the estimated classical models with system matrices below almost satisfy the physical realizability constraints:
\begin{align*}
%\begin{split}
\hat{A}_q & = \begin{bmatrix}  -5.22 & -20.05 \\ 19.97 & -4.78 \end{bmatrix},\; \hat{C}_q = \begin{bmatrix}   1.20  & 0.20 \\
                                0.93  & 0.15 \\
                                0.76  & 0.12 \end{bmatrix} \times 10^{4}, \\
\hat{B}_q & = \left[\begin{array}{cccccc} -4.06 & -0.67 & -3.12 & -0.52 & -2.55 & -0.42 \\
                                        -0.61 &  4.04 & -0.49 & 3.12 & -0.37 & 2.59 
 \end{array} \right] \times 10^{-4},
% \hat{L}_q & = \begin{bmatrix} 0.01  & 0.009  & 0.007 \\
%                               -0.77 & -0.71 & -0.62 \end{bmatrix} \times 10^{-3},
%\end{split}
\end{align*}
and
\begin{align*}
%\begin{split}
\hat{A}_p & = \begin{bmatrix} -4.78  & -20.16 \\
                              19.87  & -5.24 \end{bmatrix},\; \hat{C}_p = \begin{bmatrix} -1.19 & -0.20 \\
                              -0.92 & -0.16 \\
                              -0.76 & -0.13 \end{bmatrix} \times 10^{4}, \\
\hat{B}_p & = \left[\begin{array}{cccccc} -0.67 & 4.06 & -0.53 & 3.13 & -0.44 & 2.54\\
                                        4.01  & 0.75 & 3.08  & 0.54 & 2.59 & 0.45
\end{array}\right] \times 10^{-4}.
% \hat{L}_p & = \begin{bmatrix} -0.01  & -0.008  & -0.006 \\
%                               -0.89  & -0.72   & -0.68 \end{bmatrix} \times 10^{-3}.
%\end{split} 
\end{align*}
This suggests that when $\alpha(t)$ has sufficiently large amplitude (corresponding to a large SNR ratio of the input signal to the quantum noise) the classical subspace identification algorithm is able to produce identified classical models that are close to being physically realizable linear quantum models. 
To obtain the physically realizable system matrices, we decompose $Z_j$ as $Z_j = V_j \mathbb{J}_n V^{\top}_j$ for $j\in \{q, p\}$, where
\begin{equation*}
\begin{split}
Z_q & = \begin{bmatrix}  0 & -1.193 \\ 1.193 & 0 \end{bmatrix}, Z_p = \begin{bmatrix} 0  &  -1.189 \\ 1.189 & 0 \end{bmatrix}, \\
V_q & = \begin{bmatrix} 1.09 & 0 \\ 0 & -1.09 \end{bmatrix}, V_p = \begin{bmatrix} 0 & -1.09 \\ -1.09 & 0 \end{bmatrix}.
\end{split}
\end{equation*}
Then the corresponding physically realizable system matrices are $(\overline{A}_j, \overline{B}_j, \overline{C}_j) = (V_j^{-1} \overline{A}'_j V_j, V^{-1}_j \overline{B}'_j, \overline{C}'_j V_j)$, where $\overline{A}'_j, \overline{B}'_j, \overline{C}'_j$ are solutions returned by the LMIRank algorithm. Based on measurement data $\dot{y}_{qm}$, we obtain
\begin{equation*}
\begin{split}
\overline{A}_q & = \begin{bmatrix}  -5.21 &  20.05 \\ -19.97 & -4.77 \end{bmatrix}, \overline{C}_q = \begin{bmatrix} 2.20  & -0.36 \\
                                   1.70  & -0.28 \\
                                   1.40  & -0.23 \end{bmatrix}, \\
\overline{B}_q & =\left[\begin{array}{cccccc}  -2.22  & -0.36   & -1.71  & -0.28  & -1.40  & -0.23 \\
                                               0.34   & -2.20   & 0.27   & -1.70  & 0.20   & -1.40 \end{array} \right], \\
\overline{L}_q &= \begin{bmatrix} -2.04  & -0.75  & 0.68 \\
                                    -0.34  & 0.34 & -0.89 \end{bmatrix} \times 10^{-2}.
\end{split}
\end{equation*}
Based on measurement data $\dot{y}_{pm}$, we obtain
\begin{equation*}
\begin{split}
\overline{A}_p & = \begin{bmatrix} -5.23 & 19.88 \\ -20.17 & -4.77 \end{bmatrix}, \\
\overline{B}_p & = \left[\begin{array}{cccccc}  -2.19 & -0.40 & -1.68 & -0.29 & -1.38 & -0.25\\
                                                 0.37 & -2.23 & 0.28  & -1.71  & 0.24 & -1.42 
\end{array}\right],\\
\overline{C}_p & = \begin{bmatrix}   0.37  & 2.18 \\
                                     0.29  & 1.68 \\
                                     0.24  & 1.38 \end{bmatrix}, \\
                                     \overline{L}_p & = \begin{bmatrix} -0.33 & 1.43 & 0.81 \\  -2.06 & -1.45 & -2.09 \end{bmatrix} \times 10^{-2}.
\end{split}
\end{equation*}
Furthermore, using the ``resid'' Matlab command, we observe that the residuals $e_j(k T_s)$ of the estimated quantum models are independent of the inputs and the residuals show no autocorrelation (within 99\% confidence interval); see Fig.~\ref{fig:sample_acf} for the residual sample autocorrelation and \cite{Ljung99} for further discussions on residual diagnostics. Fig.~\ref{fig:prediction} plots the predicted outputs of the quantum model with $\Omega=100/\sqrt{T_s}$ and $n=1$ for the first 100 validation data.

\begin{figure}[!ht]
% \vspace*{-1em}
\centering
\includegraphics[scale=0.98]{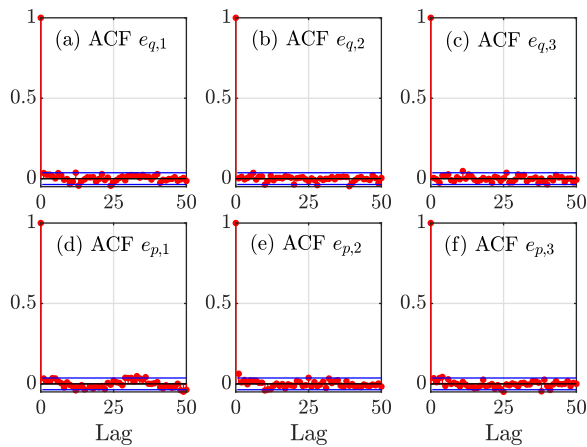}
% \vspace*{-3em}
\caption{Residual sample autocorrelation of the estimated quantum models for (a) $e_{q, 1}(kT_s)$, (b) $e_{q, 2}(kT_s)$, (c) $e_{q, 3}(kT_s)$, (d) $e_{p,1}(kT_s)$, (e) $e_{p, 2}(kT_s)$ and (f) $e_{p, 3}(kT_s)$. Horizontal blue lines are the 99\% confidence bounds. We show the sample autocorrelation up to 50 lags for illustrative purposes and we observe no sample autocorrelation (within 99\% confidence interval) for higher lags.}
\label{fig:sample_acf}
\end{figure}

\begin{figure}[!ht]
\centering
\includegraphics[scale=0.92]{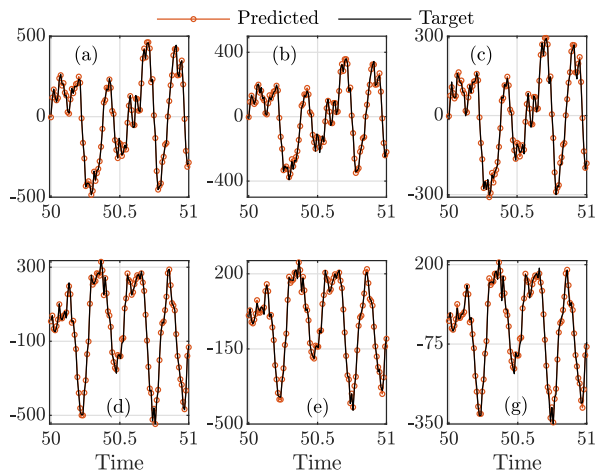}
% \vspace*{-2em}
\caption{Predicted outputs of the estimated quantum model with $\Omega=100/\sqrt{T_s}$ and $n=1$ against target outputs on the first 100 validation data points, for (a) $\hat{\dot{y}}_{qm,1}$ against $\dot{y}_{qm,1}$, (b) $\hat{\dot{y}}_{qm,2}$ against $\dot{y}_{qm,2}$, (c) $\hat{\dot{y}}_{qm,3}$ against $\dot{y}_{qm,3}$, (d) $\hat{\dot{y}}_{pm,1}$ against $\dot{y}_{pm,1}$, (e) $\hat{\dot{y}}_{pm,2}$ against $\dot{y}_{pm,2}$ and (f) $\hat{\dot{y}}_{pm,3}$ against $\dot{y}_{pm,3}$. Here $\hat{\dot{y}}_{jm, l}$ is the $l$-th component of predicted output $\hat{\dot{y}}_{jm}$ for $j \in \{q, p\}$.}
\label{fig:prediction}
\end{figure}

We remark that the values of $\gamma_j$, ${\rm FPE}_j$ and ${\rm Fit}_{j,l}$ differ for different measurement data $j\in\{q, p\}$. This is due to subspace identification returning different identified system matrices for $(\hat{A}_p, \hat{B}_p)$ and $(\hat{A}_q, \hat{B}_q)$. The two estimates are not expected to be the same as they are estimated using distinct measurement data that are in turn also generated, in general, through distinct stochastic evolutions. It may be possible to develop a technique to merge these two models together to obtain a single identified model but this is beyond the scope of the present work and is a theme for future research.

\addtolength{\textheight}{-0.1cm}

\section{Conclusion}   
In this paper, based on appropriate assumptions on the system to be identified, we develop a method to identify linear quantum system models based on single-shot continuous stochastic homodyne measurement data generated by the output of unknown linear quantum systems driven by known coherent input fields. The approach involves a two-step procedure. First  a (non-physically realizable) classical linear stochastic model is identified using well-established classical system identification algorithms. Then a polynomial matrix feasibility  problem is solved to obtain a physically realizable linear quantum system model that is in a sense close to the identified classical stochastic model. We develop a numerical algorithm for solving the polynomial matrix feasibility  problem by   adopting a matrix lifting technique previously used to numerically solve the coherent quantum LQG problem \cite{NJP09}.

We demonstrate our approach in a numerical example. The numerical algorithm is able to identify a multiple-input multiple-output optical cavity based on simulated single-shot homodyne measurement data for varying amplitudes of the coherent input vector $\alpha(t)$. Although classical identification algorithms cannot in general generate physically realizable linear quantum models, our numerical examples indicate that for $\alpha(t)$ with sufficiently high amplitude the classical identified models produced by classical subspace identification can be close to being physically realizable. That is, the identified system matrices almost satisfy the physical realizability constraints of  linear quantum systems. However, in practice,  high amplitude inputs may not be achievable or consume too much energy  to be generated. The case of much practical interest is the one with lower power inputs and this is where the method developed here will be of interest.

The current work assumes the simplification of knowing the output feedthrough matrix $D$, which in general is not the case. Future work can include generalizing the proposed approach to remove this assumption, developing improved numerical algorithms and  finding a method to combine the two models obtained by different measurement quadratures in order to identify a single model. 
 
\bibliographystyle{ieeetran}
\bibliography{qsysid}

\end{document}